\documentclass{article}
\usepackage[english]{babel}
\usepackage{geometry,amsmath,amssymb,enumerate,latexsym}
\usepackage{hyperref}
\newcommand{\assign}{:=}
\newcommand{\tmaffiliation}[1]{\\ #1}
\newcommand{\tmop}[1]{\ensuremath{\operatorname{#1}}}
\newenvironment{enumerateroman}{\begin{enumerate}[i.] }{\end{enumerate}}
\newenvironment{proof}{\noindent\textbf{Proof\ }}{\hspace*{\fill}$\Box$\medskip}
\newtheorem{corollary}{Corollary}
\newtheorem{definition}{Definition}
\newtheorem{lemma}{Lemma}
\newtheorem{theorem}{Theorem}

\begin{document}

\title{Nonuniform QCPH Collapse Implies QCPH Collapse}

\author{
  Jeremy Ahrens Huang
  \tmaffiliation{Department of Computer Science and Engineering, Pennsylvania
  State University\\
  Email: jeremyah@psu.edu}
}

\maketitle

\begin{abstract}
    Despite the importance of the non-uniform Polynomial-Time Hierarchy (PH/poly) in classical complexity understanding the collapse conditions of the Polynomial-Time Hierarchy (PH), a quantum equivalent of PH/poly has yet to be studied in the literature. We introduce the non-uniform computational Quantum Polynomial-Time Hierarchy (QCPH/mpoly), the quantum equivalent of PH/poly, and show that it collapses if and only if QCPH, the quantum equivalent of PH introduced by Gharibian et al. (comput. complex. 2022), also collapses. We also show that QCPH/mpoly collapses if coQCMA is contained in QCMA/mpoly. These results are analogous to those of Yap (TCS 1983) commonly used to invoke the collapse of PH in classical complexity. QCPH/mpoly is analogous to QCPH with non-uniform quantum verifier circuits. 
\end{abstract}

\section{Introduction}
The hypothesis that the the Polynomial-Time Hierarchy (PH) does not collapse is one of the most popular and most relied-upon in classical complexity theory as it is an even weaker assumption than P $\neq$ NP! A common way that PH collapse is invoked is through statements about the non-uniform Polynomial-Time Hierarchy $\tmop{PH}/\tmop{poly}$ such as ``small circuits for $\tmop{coNP}$ implies $\tmop{coNP} \subseteq \tmop{NP}/\tmop{poly}$'' due to two foundational results of \cite{Yap83}: first that $\Pi_i \subseteq \Sigma_i/\tmop{poly}$ implies the collapse of $\tmop{PH}/\tmop{poly}$, and second that $\tmop{PH}/\tmop{poly}$ collapses if and only if PH collapses. Surprisingly, the quantum analog of PH, QCPH, has only been defined in the last decade \cite{GSS+22}\footnote{Other quantum hierarchies were defined earlier, but they were less analogous to PH. QCPH generalizes the relationship between BQP and MQA, just as PH generalizes the relationship between the analogous classes in the classical deterministic setting P and NP.}, and a quantum analogue of $\tmop{PH}/\tmop{poly}$ has yet to be defined. In this paper we introduce the non-uniform (computational) Quantum Polynomial-Time Hierarchy $\tmop{QCPH}/\tmop{poly}$ and establish three collapse results similar to the results of \cite{Yap83} for $\tmop{PH}/\tmop{poly}$.

First we establish a collapse theorem for $\tmop{QCPH}/\tmop{poly}$. Though this result is expected, it is not a given! No collapse theorem for $\tmop{QCPH}$ was known when it was first defined; a collapse theorem was only recently given by \cite{AGKR24}.

\begin{theorem} \label{thm:advice-collapse}
  For $i \in \mathbb{Z}^+$,
  
  $\tmop{QC} \Sigma_i / \tmop{mpoly} = \tmop{QC} \Pi_i / \tmop{mpoly}$
  $\Rightarrow$ $\tmop{QC} \Sigma_i / \tmop{mpoly} = \tmop{QCPH} /
  \tmop{mpoly} = \tmop{QC} \Pi_i / \tmop{mpoly}$.
\end{theorem}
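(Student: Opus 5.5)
The plan is to follow the classical inductive collapse argument for $\tmop{PH}/\tmop{poly}$, adapted to quantum verifiers with classical proofs and non-uniform (mixed-state, ``mpoly'') advice. The induction is on the level $j \geq i$. We show that if $\tmop{QC}\Sigma_i/\tmop{mpoly} = \tmop{QC}\Pi_i/\tmop{mpoly}$, then $\tmop{QC}\Sigma_{j}/\tmop{mpoly} = \tmop{QC}\Sigma_i/\tmop{mpoly}$ for every $j \geq i$. By complementation, $\tmop{QC}\Pi_j/\tmop{mpoly} = \tmop{QC}\Pi_i/\tmop{mpoly}$ as well, and $\tmop{QCPH}/\tmop{mpoly}$, being the union over all levels, then equals $\tmop{QC}\Sigma_i/\tmop{mpoly}$.

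For the inductive step, take $L \in \tmop{QC}\Sigma_{j+1}/\tmop{mpoly}$ with verifier family $\{V_n\}$ and advice $\{\rho_n\}$. Fix the first existential proof $y_1$. The residual predicate on $(x,y_1)$ is verified by $V_{|x|}$ with the remaining $j$ alternating quantifiers starting with $\forall$. So the language $L' = \{(x,y_1)\}$ it defines lies in $\tmop{QC}\Pi_j/\tmop{mpoly}$, with advice $\rho_{|x|}$; the input length $|x|+|y_1|$ determines $|x|$ up to a fixed padding convention. By the inductive hypothesis, $L' \in \tmop{QC}\Pi_i/\tmop{mpoly} = \tmop{QC}\Sigma_i/\tmop{mpoly}$, so there is a $\tmop{QC}\Sigma_i$ verifier $W$ with advice $\sigma$ for $L'$. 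Prepending the existential guess $y_1$ to $W$'s first existential block and merging the two existential blocks puts $L$ in $\tmop{QC}\Sigma_i/\tmop{mpoly}$. The advice for $L$ at length $n$ is $\sigma$ at the single length $m = n+p(n)$, which is still polynomial size.

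The main obstacle is that $\tmop{QC}\Sigma_k$ is a promise class with completeness/soundness gaps. The residual language $L'$ may therefore fail to be a valid promise problem: for an arbitrary $y_1$, the acceptance probability of the inner game can fall in the gap even when $x$ satisfies the promise. I would handle this as \cite{AGKR24} do in their uniform collapse theorem. First amplify the gap of $V$, for instance by parallel repetition with a majority vote, so that it is at least polynomially large at every level. Then define $L'$ with threshold promises and use the fact that in the equality $\tmop{QC}\Pi_j/\tmop{mpoly} = \tmop{QC}\Sigma_i/\tmop{mpoly}$, strings outside the promise may be accepted arbitrarily. The delicate case is a bad $y_1$ whose residual value lies in the gap. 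On a YES instance this is harmless, since some $y_1$ is good. On a NO instance, soundness gives that every $y_1$ yields residual value below $1/3$, so all such $y_1$ are NO instances of $L'$ and are rejected. Hence the existential merge respects the promise.

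A secondary point is checking that mixed-state advice composes. The advice used for $L$ is a fixed quantum state depending only on $n$, so it is legitimate for $\tmop{mpoly}$. If the definition requires the verifier to succeed with every advice in some set, the same argument goes through with an extra universal quantification over advice.
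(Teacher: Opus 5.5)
Your proof is correct and has the same skeleton as the paper's. You fix $y_1$, note that the residual problem on $(x,y_1)$ is of $\Pi$ type, collapse it into $\mathrm{QC}\Sigma_i/\mathrm{mpoly}$, merge the two leading existential blocks, and repair the mismatch between the advice length for $x$ and for the longer input.

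The real difference is how the residual problem is built, and it determines which hypothesis you need:
\begin{itemize}
\item You leave the advice $\rho_{|x|}$ inside $L'$. So $L'$ lives in the non-uniform class $\mathrm{QC}\Pi_j/\mathrm{mpoly}$, and you must invoke the full equality $\mathrm{QC}\Pi_j/\mathrm{mpoly}=\mathrm{QC}\Sigma_i/\mathrm{mpoly}$ at every inductive level.
\item The paper instead passes the advice as an extra input, $L'=\{(b,x,y_1)\}$, which puts $L'$ in uniform $\mathrm{QC}\Pi_i$. Its key Lemma~\ref{lem:advice-collapse} therefore needs only $\mathrm{QC}\Pi_i\subseteq\mathrm{QC}\Sigma_i/\mathrm{mpoly}$, and Theorem~\ref{thm:non-advice-inclusion} follows from Corollary~\ref{cor:advice-collapse} together with the present statement.
\end{itemize}
Your version buys simplicity in return. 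Padding $y_1$ to a fixed length lets the single string $\sigma_{n+p(n)}$ serve as advice, whereas the paper concatenates $a''$ over all possible lengths of $(a_{|x|},x,y_1)$. Your induction also spells out the climb up the hierarchy, which the paper summarizes as ``recursively applying'' its lemma.

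Two side remarks:
\begin{itemize}
\item The amplification step is superfluous. $L'$ is automatically a valid promise problem, since its yes and no conditions are disjoint. Pairs whose residual game falls in the gap simply lie outside its promise, and your own NO-instance argument already shows this is harmless.
\item In this paper, mpoly denotes a classical advice string $a(n)$, not a mixed state. Your argument goes through unchanged in either reading.
\end{itemize}
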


Second we show that $\tmop{QCPH}/\tmop{poly}$ also collapses under a weaker condition.

\begin{theorem} \label{thm:non-advice-inclusion}
  For $i \in \mathbb{Z}^+$,
  
  $\tmop{QC} \Pi_i \subseteq \tmop{QC} \Sigma_i / \tmop{mpoly}$
  $\Rightarrow$ $\tmop{QC} \Sigma_i / \tmop{mpoly} = \tmop{QCPH} /
  \tmop{mpoly} = \tmop{QC} \Pi_i / \tmop{mpoly}$.
\end{theorem}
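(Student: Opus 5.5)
The plan is to reduce Theorem~\ref{thm:non-advice-inclusion} to Theorem~\ref{thm:advice-collapse}. It suffices to lift the hypothesis $\tmop{QC}\Pi_i \subseteq \tmop{QC}\Sigma_i/\tmop{mpoly}$ to the statement $\tmop{QC}\Pi_i/\tmop{mpoly} \subseteq \tmop{QC}\Sigma_i/\tmop{mpoly}$. The reverse inclusion then follows by complementation: if $L \in \tmop{QC}\Sigma_i/\tmop{mpoly}$, then $\bar L \in \tmop{QC}\Pi_i/\tmop{mpoly} \subseteq \tmop{QC}\Sigma_i/\tmop{mpoly}$, so $L \in \tmop{QC}\Pi_i/\tmop{mpoly}$. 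This complementation step uses that the non-uniform classes are closed under swapping accept/reject with the dual quantifier structure, which holds by the definitions. Together the two inclusions give $\tmop{QC}\Sigma_i/\tmop{mpoly} = \tmop{QC}\Pi_i/\tmop{mpoly}$, and Theorem~\ref{thm:advice-collapse} then yields the full collapse.

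The lifting step is the standard ``advice-as-input'' trick. Let $L \in \tmop{QC}\Pi_i/\tmop{mpoly}$ be decided by a uniform family of $\Pi_i$ verifiers $V$ with advice strings $a_n$ of length $p(n)$. We may assume $V$ is the uniform verifier taking the advice as input; if mpoly means non-uniform circuits of polynomial size, we encode the circuit description as advice and use a uniform universal circuit evaluator. Define
\[
L' = \{ \langle x, a \rangle : |a| = p(|x|),\ V \text{ accepts } (x,a) \text{ in the } \Pi_i \text{ sense} \}.
\]
The main obstacle is the promise gap. For arbitrary $a$, the pair $(x,a)$ need not satisfy the completeness/soundness promise, so $L'$ is naturally a promise problem. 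We would handle this by treating all classes as promise classes, which is the convention of Gharibian et al. The promise for $L'$ consists of those $\langle x,a\rangle$ on which $V$ satisfies the gap, and $\langle x, a_{|x|}\rangle$ lies in it for every $x$. Then $L' \in \tmop{QC}\Pi_i$ uniformly, so by hypothesis $L' \in \tmop{QC}\Sigma_i/\tmop{mpoly}$ via some $\Sigma_i$ verifier $W$ with advice $b_m$, where $m = |\langle x,a\rangle|$ depends only on $n$. Any error reduction needed to keep thresholds consistent uses parallel repetition, which is available for QCPH verifiers with classical proofs.

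To finish, we decide $L$ by giving the combined advice $(a_n, b_{m(n)})$. On input $x$, the new $\Sigma_i$ verifier runs $W$ on $\langle x, a_n\rangle$ with advice $b_{m(n)}$. Since $\langle x,a_n\rangle$ is inside the promise of $L'$, correctness follows from $W$'s guarantee. The combined advice is polynomial, and the quantifier structure is preserved, so $L \in \tmop{QC}\Sigma_i/\tmop{mpoly}$, as required.
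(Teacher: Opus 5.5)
Your proof is correct and follows essentially the paper's route. The paper (Corollary~\ref{cor:advice-collapse}) likewise first obtains $\tmop{QC}\Pi_i/\tmop{mpoly} \subseteq \tmop{QC}\Sigma_i/\tmop{mpoly}$, complements to get equality, and then uses the same collapse argument that proves Theorem~\ref{thm:advice-collapse}. The only difference is how you get that inclusion: you derive it directly by the advice-as-input trick, where fixing $|a| = p(n)$ makes a single advice string $b_{m(n)}$ suffice, whereas the paper gets it from Lemma~\ref{lem:advice-collapse}, which is the same trick with an extra existential quantifier $y_1$ absorbed. One small correction: $\langle x, a_{|x|}\rangle$ lies in the promise of $L'$ only for $x$ in the promise of $L$, not for every $x$, but that is all your argument uses.
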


Last but not least, we show that either both $\tmop{QCPH}$ and $\tmop{QCPH}/\tmop{poly}$ collapse, or neither collapses. 

\begin{theorem}
  \label{thm:collapse-collapse}For $i \in \mathbb{Z}^+$, $\tmop{QC} \Sigma_i /
  \tmop{mpoly} = \tmop{QC} \Pi_i / \tmop{mpoly}$ $\Rightarrow$ $\tmop{QC}
  \Sigma_{i + 2} = \tmop{QC} \Pi_{i + 2}$.
\end{theorem}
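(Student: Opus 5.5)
This is the quantum analogue of Yap's argument that $\Sigma_i/\mathrm{poly}=\Pi_i/\mathrm{poly}$ implies $\Sigma_{i+2}=\Pi_{i+2}$. Classically, the collapse of the non-uniform hierarchy gives small $\Sigma_i$-advice circuits for $\Pi_i$-complete problems, and the extra two quantifier levels are spent guessing and checking that advice.

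\textbf{Step 1: use the hypothesis to get advice for $\tmop{QC}\Pi_i$.} Assume $\tmop{QC}\Sigma_i/\tmop{mpoly}=\tmop{QC}\Pi_i/\tmop{mpoly}$. Then every $\tmop{QC}\Pi_i$ language $L$ lies in $\tmop{QC}\Sigma_i/\tmop{mpoly}$. So there is a polynomial-size classical advice string $a_n$ and a uniform $\tmop{QC}\Sigma_i$ verifier $V$ such that $V(\cdot,a_n)$ decides $L$ on length $n$ with the promise gap. For the $\Sigma_i$-side I will choose a suitable $\tmop{QC}\Sigma_i$-complete promise problem. Its complement is $\tmop{QC}\Pi_i$-complete, and the circuit-evaluation form of QCPH from \cite{GSS+22} should provide one.

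\textbf{Step 2: make the advice checkable.} The classical proof uses self-reducibility, or equivalently Karp--Lipton-style guessing, to verify that a guessed advice string is correct. Here I will not verify the advice directly. Instead I guess $a$ existentially and require that $V(\cdot,a)$ agrees with the $\tmop{QC}\Pi_i$ definition of $L$ on all inputs of length $n$, up to the promise. The statement ``for all $x$, $[x\in L_{\Pi}] \leftrightarrow [V(x,a)$ accepts$]$'' expands to $\forall x$ combined with a $\Pi_i$ predicate and a $\Sigma_i$ predicate. After merging quantifiers, this is a $\Pi_{i+1}$ condition on $a$. So ``$\exists a$ correct, and $V(x_0,a)$ accepts'' is in $\tmop{QC}\Sigma_{i+2}$.

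\textbf{Step 3: collapse the hierarchy.} With the checkable advice in hand, I take $L\in\tmop{QC}\Pi_{i+2}$, written as $\forall y\,\exists z\,R(x,y,z)$ with $R\in\tmop{QC}\Pi_i$. I replace the inner $\Pi_i$ predicate by $V(\cdot,a)$ under guessed, checked advice $a$. This gives $\exists a\,[\mathrm{check}(a)\wedge \forall y\,\exists z\,V((x,y,z),a)]$. The $\exists z$ merges into $V$'s first existential quantifier, and the $\forall y$ plus the check lie in $\Pi_{i+1}$. The result is $\tmop{QC}\Sigma_{i+2}$, so $\tmop{QC}\Pi_{i+2}\subseteq\tmop{QC}\Sigma_{i+2}$, and by symmetry the two classes are equal.

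\textbf{Main obstacle.} The hardest part is the quantum promise in Step 2. Quantifiers range over classical strings in QCPH, so merging them is routine, but the verifiers are bounded-error on promise inputs only. The check ``$V(x,a)$ agrees with $L$ on all $x$'' is meaningful only on inputs satisfying the promise. Worse, $V(\cdot,a)$ for a wrong $a$ may give intermediate acceptance probabilities, which break the $\Sigma_i/\Pi_i$ characterization once it is used inside further quantifiers.

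My first attempt is to restrict the check to promise inputs and handle the rest with amplification. By \cite{GSS+22}, QCPH verifiers admit error reduction with classical proofs. I would then use the robustness trick from \cite{AGKR24}'s collapse theorem, so that ``bad'' advice is caught on some input where it deviates by a constant amount. If that fails, I would define correctness with a gap, $\ge 2/3$ versus $\le 1/3$, and argue that any advice passing the gapped check is good enough for the substitution in Step 3.
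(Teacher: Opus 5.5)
Your outline takes a genuinely different route from the paper. You apply the hypothesis only to the innermost $\tmop{QC}\Pi_i$ predicate $R$ of a $\tmop{QC}\Pi_{i+2}$ problem and certify the guessed advice directly against the definition of $R$. The paper instead uses Corollary~\ref{cor:advice-collapse} to place the complete problem $\tmop{TCPQC}\Pi_{i+2}$ in $\tmop{QC}\Pi_i/\tmop{mpoly}$, and certifies the advice locally through the recursive structure. Your classical skeleton is fine, but there is a genuine gap: the proposal stops at the step that carries the quantum content, and the Step~2 check as written does not work for promise problems.

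Write $\rho(w)$ for the min--max acceptance value of the $\Pi_i$ formula defining $R$, and $\sigma(w,a)$ for that of $V(\cdot,a)$. The quantified conditions in the definitions say exactly that these values are $\ge 2/3$ or $\le 1/3$.

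\emph{Completeness can fail.} Your $\forall w$ ranges over strings with $\rho(w)\in(1/3,2/3)$. There $\sigma(w,a^*)$ is unconstrained even for the correct advice $a^*$. So your biconditional clause at such $w$ can have value near $1/2$, and $\mathrm{check}(a^*)$ need not be a yes-instance. The verifier cannot restrict the quantifier to promise inputs. Amplification does not help either: with reused classical proofs, majority amplification sends a value $p$ to $\Pr[\mathrm{Bin}(k,p)>k/2]$, which leaves values near $1/2$ intermediate.

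\emph{Soundness can fail.} For $x\in L_{\tmop{no}}$, every $a$ must make $\mathrm{check}(a)\wedge\forall y\exists z\,V((x,y,z),a)$ a no-instance. An $a$ with $\sigma(w,a)\approx 1/2$ on the relevant $w\in R_{\tmop{no}}$ is not rejected by the check, and it does not force the second conjunct to reject. Your fallbacks (a robustness trick, a gapped notion of correctness) do not say how either failure is avoided.

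The missing idea is a one-sided check with staggered thresholds, arranged so that wherever one test is indeterminate, another is determinate.
\begin{enumerate}
\item Apply the hypothesis to $R'$ rather than $R$, with $R'_{\tmop{yes}}=\{w:\rho(w)\ge 2/3\}$ and $R'_{\tmop{no}}=\{w:\rho(w)\le 0.6\}$. This is still in $\tmop{QC}\Pi_i$, since threshold shifting and amplification with reused classical proofs act on the min--max value by a monotone map.
\item Amplify $V$ so that $\sigma(w,a^*)\ge 1-\epsilon$ on $R'_{\tmop{yes}}$ and $\sigma(w,a^*)\le\epsilon$ on $R'_{\tmop{no}}$.
\item Check with $\mathrm{Chk}(a)=\forall w\,[(\rho(w)\ge 0.6\text{ vs.\ }\le 0.4)\vee(\sigma(w,a)\le 0.4\text{ vs.\ }\ge 0.6)]$, which is a $\tmop{QC}\Pi_i$ formula.
\item In Step~3, use the test $\sigma\ge 0.8$ vs.\ $\le 0.6$.
\end{enumerate}
Then $a^*$ passes at every $w$, because $\rho(w)<0.6$ puts $w$ in $R'_{\tmop{no}}$. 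Now take $x\in L_{\tmop{no}}$ with $y^*$ such that $\rho(x,y^*,z)\le 1/3$ for all $z$. Any $a$ either has some $z$ with $\sigma\ge 0.6$, making $\mathrm{Chk}(a)$ a no-instance, or has $\sigma<0.6$ for all $z$, making the main conjunct a no-instance. OR and AND of amplified formulas with interleaved quantifiers respect the yes/no sets. So this yields $\tmop{QC}\Pi_{i+2}\subseteq\tmop{QC}\Sigma_{i+2}$ from $\tmop{QC}\Pi_i\subseteq\tmop{QC}\Sigma_i/\tmop{mpoly}$ alone, with no recursive structure needed.
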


Taken all together, these results give broader and more useful conditions for QCPH collapse such as $\tmop{coMQA} \subseteq \tmop{MQA}/\tmop{poly}$ (the quantum equivalent of $\tmop{coNP} \subseteq \tmop{NP}/\tmop{poly}$), bringing the quantum complexity toolbox closer to parity with classical complexity. They also strengthen and generalize the ``Karp-Lipton for QCPH'' result from \cite{AGKR24}: \[\tmop{QC\Sigma}_1 \subseteq \tmop{QC\Pi}_0 / \tmop{mpoly} \Rightarrow \tmop{QCPH} = \tmop{QC\Pi}_2 = \tmop{QC\Sigma}_2.\]

\subsection{Prior work}

The Polynomial-Time Hierarchy (PH) was first introduced by \cite{Sto76}. It generalized the relationship between P, the set of languages decidable by a deterministic classical computer in polynomial time, and NP, the set of languages decidable by a deterministic classical computer in polynomial time when given a certificate that can be generated by a deterministic classical computer in polynomial time. PH can be defined in the following way: let $\Sigma_i$ be the set of languages such that $L\in\Sigma_i$ if and only if 
\begin{eqnarray*}
    x \in L & \Leftrightarrow & \exists y_1 \forall y_2 \exists y_3
    \ldots Q_i y_i [C (x, y_1, y_2, \ldots, y_i) = 1]
\end{eqnarray*}
for some uniformly-generate classical circuit family $C$, and let $\Pi_i$ be defined similarly except with the quantifier order swapped. Then \[\tmop{PH} = \bigcup_{i>0}\Sigma_i=\bigcup_{i>0}\Pi_i.\] This provided an alternate way to study the relationship between P and NP and a weaker assumption than P $\neq$ NP to base complexity conjectures on. \cite{KL80} formalized non-uniform classes and showed that $\tmop{\Sigma}_1 \subseteq \tmop{\Pi}_0 / \tmop{poly}$ implied the collapse of PH, introducing the use of recursive structures to prove collapse results. \cite{Yap83} introduced PH/poly, showed that its collapse was tied to the collapse of PH, and improved and generalized the condition of the \cite{KL80} result to $\tmop{\Sigma}_i \subseteq \tmop{\Pi}_i / \tmop{poly}$. PH/poly can be thought of as PH with non-uniform circuits or with uniform circuits that receive an advice string with every input that depends only on the input length. 

\cite{GSS+22} were the first to introduce the quantum equivalent of PH, which they call QCPH. It generalizes the relationship between BQP, the set of languages decidable by a randomized quantum computer in polynomial time, and MQA (also known as QCMA), the set of languages decidable by a randomized quantum computer in polynomial time when given a certificate that can be generated by a randomized quantum computer in polynomial time. QCPH can be thought of as PH but with a bounded-error quantum verifier circuit instead of a deterministic classical circuit. Unfortunately, they were not able to show a condition under which QCPH would collapse. \cite{AGKR24} showed that QCPH collapses if $\tmop{QC\Pi_i}=\tmop{QC\Sigma_i}$ using better techniques for working with quantified circuits and \cite{FGN23} showed that QCPH collapses by giving an alternate definition for it using variants of quantifier operators on classes that work better for promise problems. \cite{AGKR24} also showed a \cite{KL80}-like result for MQA and $\tmop{BQP}/\tmop{mpoly}$. 

Before \cite{GSS+22}, other quantum hierarchies were defined by \cite{Yam02, LG17}.

\subsection{Acknowledgments}
I want to thank Avantika Agarwal for a helpful discussion of the QCPH collapse theorem. 

\paragraph{AI Statement} No LLM tools were used in the creation of this paper.

\section{Preliminaries}

We start with a definition of QCPH.

\begin{definition}
  $\tmop{QC} \Sigma_i$: Let $S = (S_{\tmop{yes}}, S_{\tmop{no}})$ be a promise
  problem. $S \in \tmop{QC} \Sigma_i$ iff there is a uniform family of
  poly-time quantum circuits $V$ such that
  \begin{eqnarray*}
    x \in S_{\tmop{yes}} & \Rightarrow & \exists y_1 \forall y_2 \exists y_3
    \ldots Q_i y_i \Pr [V (x, y_1, y_2, \ldots, y_i) = 1] \geqslant 2 / 3\\
    x \in S_{\tmop{no}} & \Rightarrow & \forall y_1 \exists y_2 \forall y_3
    \ldots \overline{Q}_i y_i \Pr [V (x, y_1, y_2, \ldots, y_i) = 1] \leqslant
    1 / 3
  \end{eqnarray*}
  where the $y_i$ are quantum states with $O (\tmop{poly} (| x |))$
  descriptive complexity for uniform quantum circuits.
\end{definition}

Note that MQA = $\tmop{QC}\Sigma_1$. Also note that the quantum states $y_i$ can be replaced by classical bit strings describing quantum circuits that generate the $y_i$ and $V$ can use those descriptions to generate $y_i$ itself. 

\begin{definition}
  $\tmop{QC} \Pi_i = \tmop{coQC} \Sigma_i$.
\end{definition}

Note that taking the complement of a promise language simply swaps the yes and no sets, leaving the set of indeterminate strings unchagned. 

\begin{definition}
  (Computational Quantum Polynomial Hierarchy) $\tmop{QCPH} = \bigcup_{i \in
  \mathbb{N}} \tmop{QC} \Sigma_i = \bigcup_{i \in \mathbb{N}} \tmop{QC}
  \Pi_i$.
\end{definition}

Despite being unresolved for half a decade, we now have a collapse theorem for QCPH. 

\begin{theorem}[\cite{AGKR24}]
  \label{thm:qcph-collapse} $\tmop{QC} \Sigma_i = \tmop{QC}
  \Pi_i$ implies $\tmop{QCPH} = \tmop{QC} \Sigma_i = \tmop{QC} \Pi_i$.
\end{theorem}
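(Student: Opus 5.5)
The plan is the classical inductive argument for the collapse of PH, done carefully for promise problems. The central claim is that $\tmop{QC}\Sigma_i = \tmop{QC}\Pi_i$ implies $\tmop{QC}\Sigma_{i+1} \subseteq \tmop{QC}\Sigma_i$. Two inclusions are routine and come first. Padding with a dummy quantified variable that $V$ ignores gives $\tmop{QC}\Sigma_i \subseteq \tmop{QC}\Sigma_{i+1}$ and $\tmop{QC}\Pi_i \subseteq \tmop{QC}\Sigma_{i+1}$. Complementation then yields the same statements for the $\Pi$ classes. Once the central claim holds, induction on $j \ge i$ gives $\tmop{QC}\Sigma_j = \tmop{QC}\Sigma_i$. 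Taking complements gives $\tmop{QC}\Pi_j = \tmop{QC}\Pi_i = \tmop{QC}\Sigma_i$, so $\tmop{QCPH} = \tmop{QC}\Sigma_i = \tmop{QC}\Pi_i$.

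For the central claim, take $S \in \tmop{QC}\Sigma_{i+1}$ with verifier $V$. Throughout, I use the remark after the definition of $\tmop{QC}\Sigma_i$: each $y_j$ is replaced by a classical description of a poly-size circuit preparing it. I define a new promise problem $S'$ on pairs $(x, y_1)$. A pair is in $S'_{\tmop{yes}}$ iff $\forall y_2 \exists y_3 \ldots Q_{i+1} y_{i+1}\ \Pr[V(x,y_1,\ldots,y_{i+1})=1] \ge 2/3$. A pair is in $S'_{\tmop{no}}$ iff $\exists y_2 \forall y_3 \ldots \overline{Q}_{i+1} y_{i+1}\ \Pr[V = 1] \le 1/3$. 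By construction $S'$ is in $\tmop{QC}\Pi_i$ with verifier $V$ itself, provided it is a legitimate promise problem.

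The main obstacle is exactly this legitimacy: I must check that $S'_{\tmop{yes}} \cap S'_{\tmop{no}} = \emptyset$, since otherwise the hypothesis cannot be applied. I would prove the following by induction on the number of quantifiers. A formula "$Q_1 z_1 \ldots Q_k z_k\ p(z) \ge 2/3$" and its dual "$\overline{Q}_1 z_1 \ldots \overline{Q}_k z_k\ p(z) \le 1/3$" cannot both hold. If $Q_1 = \forall$, the dual provides a witness $z_1$; instantiating the first formula at that $z_1$ reduces to the case $k-1$. The case $Q_1 = \exists$ is symmetric. The base case $k=0$ is $p \ge 2/3$ together with $p \le 1/3$, which is impossible. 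The only quantum-specific point is that the quantified objects range over a fixed set of poly-size descriptions, so the instantiation step is valid.

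Given this, the hypothesis puts $S'$ in $\tmop{QC}\Sigma_i$ via some uniform verifier $W$ with quantified variables $z_1,\ldots,z_i$. I then build a $\tmop{QC}\Sigma_i$ verifier for $S$ by merging the outer $\exists y_1$ with $W$'s leading $\exists z_1$ into a single existential witness $(y_1,z_1)$; the new verifier runs $W(x, y_1, z_1, \ldots, z_i)$. If $x \in S_{\tmop{yes}}$, some $y_1$ puts $(x,y_1)$ in $S'_{\tmop{yes}}$, so the $\Sigma_i$ formula for $W$ holds with probability at least $2/3$. If $x \in S_{\tmop{no}}$, every $y_1$ puts $(x,y_1)$ in $S'_{\tmop{no}}$, so the dual formula holds for every $(y_1,z_1)$. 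Hence $S \in \tmop{QC}\Sigma_i$.

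Two bookkeeping points remain. The input length of $S'$ is $|x| + \tmop{poly}(|x|)$, so $W$'s witness sizes and running time stay polynomial in $|x|$. The merged witness stays within the allowed descriptive complexity because concatenating two poly-size circuit descriptions is again poly-size.
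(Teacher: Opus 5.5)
Your argument is correct. The paper gives no proof of this statement; it is imported from \cite{AGKR24}. Your route is the uniform, advice-free version of the paper's own Lemma~\ref{lem:advice-collapse} and Corollary~\ref{cor:advice-collapse}. In both, you peel off $y_1$ to get an intermediate $\tmop{QC}\Pi_i$ problem, apply the hypothesis to it, merge $y_1$ into the first witness of the resulting $\tmop{QC}\Sigma_i$ verifier, and then climb the hierarchy one level at a time.

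Working without advice lets you skip the repackaged advice $a'''$ the paper needs so that $V''$ receives advice of the correct length. You also supply the disjointness check $S'_{\tmop{yes}} \cap S'_{\tmop{no}} = \emptyset$, which Lemma~\ref{lem:advice-collapse} only asserts (``$L'$ is clearly a valid promise problem''). Your remark that for $x \in S_{\tmop{yes}}$ only one good $y_1$ is needed is what makes the construction safe for promise problems. The same goes for the fact that for $x \in S_{\tmop{no}}$ every $y_1$ lands in $S'_{\tmop{no}}$.

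Two small points to tidy up:
\begin{itemize}
\item Since $\tmop{QC}\Pi_i$ is defined as $\tmop{coQC}\Sigma_i$, the $\tmop{QC}\Sigma_i$ verifier for $(S'_{\tmop{no}}, S'_{\tmop{yes}})$ is $V$ with its output bit flipped, using $\Pr[\neg V = 1] = 1 - \Pr[V = 1]$, rather than $V$ itself.
\item In the induction, state explicitly that $\tmop{QC}\Sigma_{j+1} = \tmop{QC}\Sigma_j$ plus complementation gives $\tmop{QC}\Sigma_{j+1} = \tmop{QC}\Pi_{j+1}$. That is what makes the hypothesis available at the next level.
\end{itemize}
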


Now we are ready to define QCPH/mpoly. Naturally, we define it as QCPH but with non-uniform verifiers. We phrase the formal definition below in terms of uniform verifiers with advice because explicit functions on the advice string will be used in the proofs. Note that non-uniform classes with bounded error are named ``mpoly'' because ``poly'' has a specific definition in \cite{KL80} that causes an ``$\exists$BPP vs MA''-like problem with bounded-error classes\footnote{See \href{https://complexityzoo.net/Complexity_Zoo:B\#bqppoly}{here} for a discussion of the distinction between Merlin-like and Karp-Lipton-like advice.}.

\begin{definition}
  $\tmop{QC} \Sigma_i / \tmop{mpoly}$: Let $S = (S_{\tmop{yes}},
  S_{\tmop{no}})$ be a promise problem.
  
  $S \in \tmop{QC} \Sigma_i / \tmop{mpoly}$ iff there is an advice function $a
  : \mathbb{N} \mapsto \{ 0, 1 \}^{\ast}$ s.t. $| a (n) | \leqslant p (n)$ and
  a uniform family of poly-time quantum circuits $V$ such that
  \begin{eqnarray*}
    x \in S_{\tmop{yes}} & \Rightarrow & \exists y_1 \forall y_2 \exists y_3
    \ldots Q_i y_i \Pr [V (a_{| x |}, x, y_1, y_2, \ldots, y_i) = 1] \geqslant
    2 / 3\\
    x \in S_{\tmop{no}} & \Rightarrow & \forall y_1 \exists y_2 \forall y_3
    \ldots \overline{Q}_i y_i \Pr [V (a_{| x |}, x, y_1, y_2, \ldots, y_i) =
    1] \leqslant 1 / 3,
  \end{eqnarray*}
  where $| y_i | \leqslant q (| x |)$ for some polynomials $p, q$.
\end{definition}

\begin{definition}
  $\tmop{QC} \Pi_i / \tmop{mpoly} = \tmop{co} (\tmop{QC} \Sigma_i /
  \tmop{mpoly})$
\end{definition}

\begin{definition}
  (Nonuniform Computational Quantum Polynomial Hierarchy)
  
  $\tmop{QCPH} / \tmop{mpoly} = \bigcup_{i \in \mathbb{N}} \tmop{QC} \Sigma_i
  / \tmop{mpoly} = \bigcup_{i \in \mathbb{N}} \tmop{QC} \Pi_i / \tmop{mpoly}$
\end{definition}

We can also call this class $\tmop{QCPH}$ with deterministic Merlin-like
advice.

\section{Collapse theorems}

The following lemma will be useful for our key corollary. \cite{Yap83} gives a version for total decision problems, but we need one for promise problems.

\begin{lemma}
  \label{co-mpoly-commute}$(\tmop{co} {\textendash}K) / \tmop{mpoly} =
  \tmop{co} {\textendash} (K / \tmop{mpoly})$
\end{lemma}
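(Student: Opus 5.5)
The plan is to prove the two inclusions directly from the definitions. We need a general notion of $K/\tmop{mpoly}$ for a class $K$ of promise problems: $S \in K/\tmop{mpoly}$ iff there is a promise problem $T=(T_{\tmop{yes}},T_{\tmop{no}}) \in K$, a polynomial $p$, and an advice function $a$ with $|a(n)| \leqslant p(n)$, such that
\begin{eqnarray*}
  x \in S_{\tmop{yes}} & \Rightarrow & (a_{|x|}, x) \in T_{\tmop{yes}},\\
  x \in S_{\tmop{no}} & \Rightarrow & (a_{|x|}, x) \in T_{\tmop{no}}.
\end{eqnarray*}
This is consistent with the definition of $\tmop{QC}\Sigma_i/\tmop{mpoly}$ above: take $T$ to be the promise problem decided by $V$ with the advice treated as part of the input. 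Likewise $\tmop{co}\text{--}K = \{ (S_{\tmop{no}}, S_{\tmop{yes}}) : (S_{\tmop{yes}}, S_{\tmop{no}}) \in K\}$, which just swaps the yes and no sets and leaves the indeterminate strings unchanged.

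For $\subseteq$: take $S \in (\tmop{co}\text{--}K)/\tmop{mpoly}$, witnessed by $T \in \tmop{co}\text{--}K$ and advice $a$. Then $\overline{T} = (T_{\tmop{no}}, T_{\tmop{yes}}) \in K$. The complemented problem $\overline{S} = (S_{\tmop{no}}, S_{\tmop{yes}})$ satisfies $x \in S_{\tmop{no}} \Rightarrow (a_{|x|},x) \in T_{\tmop{no}} = \overline{T}_{\tmop{yes}}$, and symmetrically for the no side. So the same advice $a$ together with $\overline{T}$ shows $\overline{S} \in K/\tmop{mpoly}$, hence $S \in \tmop{co}\text{--}(K/\tmop{mpoly})$. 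The reverse inclusion $\supseteq$ is the same argument read backwards: start from $\overline{S} \in K/\tmop{mpoly}$ via $(T, a)$ and use $\overline{T} \in \tmop{co}\text{--}K$ with the same advice.

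The only point needing care is the promise-problem aspect. Because complementation keeps the promise (the set of indeterminate strings) fixed, the implications only need to hold on $S_{\tmop{yes}} \cup S_{\tmop{no}}$. Strings outside the promise impose no constraint on either side, so they cause no trouble. In Yap's total-language version one must instead argue that the advised machine decides a language; here that issue disappears. The advice is a deterministic string depending only on $|x|$, so it is unaffected by complementation and no error-reduction or advice-robustness argument is needed.

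For the concrete instance used later, $K = \tmop{QC}\Sigma_i$, I would also check the matching syntactic version. Negating the acceptance condition means replacing $V$ by the circuit $V'$ that runs $V$ and flips the output bit. This turns $\Pr \geqslant 2/3$ into $\leqslant 1/3$ and vice versa, with the quantifier alternation pattern dualized. This gives $\tmop{QC}\Pi_i/\tmop{mpoly} = \tmop{co}(\tmop{QC}\Sigma_i/\tmop{mpoly})$ and agrees with the general statement.
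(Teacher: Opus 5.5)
Your proof is correct and follows essentially the paper's argument: unwind the definition of $K/\tmop{mpoly}$, keep the same advice function $a$, and replace the underlying problem by its complement, with the reverse inclusion obtained by running the argument backwards. The only cosmetic difference is that you state advice membership with one-sided implications ($x \in S_{\tmop{yes}} \Rightarrow (a_{|x|},x) \in T_{\tmop{yes}}$, and likewise for the no side) instead of the paper's biconditionals; this is harmless, arguably better suited to promise problems, and matches how advice is used later in Lemma~\ref{lem:advice-collapse}.
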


\begin{proof}
  Let $L \in \tmop{co} {\textendash} (K / \tmop{mpoly})$.
  
  Since $L$ may be a promise problem, we have $L = (L_{\tmop{yes}},
  L_{\tmop{no}})$ and $\tmop{co} {\textendash}L = ((\tmop{co}
  {\textendash}L)_{\tmop{yes}}, (\tmop{co} {\textendash}L)_{\tmop{no}})$ so by
  definition $(L_{\tmop{yes}}, L_{\tmop{no}}) = ((\tmop{co}
  {\textendash}L)_{\tmop{no}}, (\tmop{co} {\textendash}L)_{\tmop{yes}})$.
  Since $\tmop{co} {\textendash}L \in K / \tmop{mpoly}$, $\tmop{co}
  {\textendash}L = L' / a$ for some $L' \in K$ and poly-size advice function
  $a_{| x |}$, so $x \in (\tmop{co} {\textendash}L)_{\tmop{yes}}
  \Leftrightarrow (x, a_{| x |}) \in L'_{\tmop{yes}}$ and $x \in (\tmop{co}
  {\textendash}L)_{\tmop{no}} \Leftrightarrow (x, a_{| x |}) \in
  L'_{\tmop{no}}$. Then
  \begin{eqnarray*}
    x \in L_{\tmop{yes}} \Leftrightarrow x \in (\tmop{co}
    {\textendash}L)_{\tmop{no}} \Leftrightarrow (x, a_{| x |}) \in
    L'_{\tmop{no}} & \Leftrightarrow & (x, a_{| x |}) \in (\tmop{co}
    {\textendash}L')_{\tmop{yes}} \Leftrightarrow x \in (\tmop{co}
    {\textendash}L')_{\tmop{yes}} / a\\
    x \in L_{\tmop{no}} \Leftrightarrow x \in (\tmop{co}
    {\textendash}L)_{\tmop{yes}} \Leftrightarrow (x, a_{| x |}) \in
    L'_{\tmop{yes}} & \Leftrightarrow & (x, a_{| x |}) \in (\tmop{co}
    {\textendash}L')_{\tmop{no}} \Leftrightarrow x \in (\tmop{co}
    {\textendash}L')_{\tmop{no}} / a
  \end{eqnarray*}
  so $L = (\tmop{co} {\textendash}L') / a \in (\tmop{co} {\textendash}K) /
  \tmop{mpoly}$, which establishes $(\tmop{co} {\textendash}K) / \tmop{mpoly}
  \supseteq \tmop{co} {\textendash} (K / \tmop{mpoly})$. Choosing $(\tmop{co}
  {\textendash}L') / a$ to be an arbitrary problem in $(\tmop{co}
  {\textendash}K) / \tmop{mpoly}$ and following the argument in reverse
  establishes \ $(\tmop{co} {\textendash}K) / \tmop{mpoly} \subseteq \tmop{co}
  {\textendash} (K / \tmop{mpoly})$.
\end{proof}

\subsection{Collapse theorem for $\tmop{QCPH} / \tmop{mpoly}$}

In this section we will prove Theorems \ref{thm:advice-collapse} and \ref{thm:non-advice-inclusion}. The following is our key lemma:

\begin{lemma}
  \label{lem:advice-collapse}For any $i \in \mathbb{Z}^+$:
  
  (i) $\tmop{QC} \Pi_i \subseteq \tmop{QC} \Sigma_i / \tmop{mpoly}$ implies
  $\tmop{QC} \Sigma_{i + 1} / \tmop{mpoly} = \tmop{QC} \Sigma_i /
  \tmop{mpoly}$
  
  (ii) $\tmop{QC} \Sigma_i \subseteq \tmop{QC} \Pi_i / \tmop{mpoly}$ implies
  $\tmop{QC} \Pi_{i + 1} / \tmop{mpoly} = \tmop{QC} \Pi_i / \tmop{mpoly}$
\end{lemma}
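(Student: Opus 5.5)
The plan for Lemma~\ref{lem:advice-collapse} is to follow Yap's classical argument: peel off the outermost existential quantifier of a $\tmop{QC}\Sigma_{i+1}/\tmop{mpoly}$ verifier and observe that what remains is a $\tmop{QC}\Pi_i$ problem. The hypothesis turns that problem into a $\tmop{QC}\Sigma_i/\tmop{mpoly}$ one, and the two adjacent existential quantifiers then merge. Part (ii) will follow from part (i) by complementation using Lemma~\ref{co-mpoly-commute}. For (ii), $\tmop{QC}\Sigma_i \subseteq \tmop{QC}\Pi_i/\tmop{mpoly}$ gives, by taking complements and applying Lemma~\ref{co-mpoly-commute}, $\tmop{QC}\Pi_i \subseteq \tmop{co}(\tmop{QC}\Pi_i/\tmop{mpoly}) = \tmop{QC}\Sigma_i/\tmop{mpoly}$. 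Part (i) then gives $\tmop{QC}\Sigma_{i+1}/\tmop{mpoly} = \tmop{QC}\Sigma_i/\tmop{mpoly}$, and complementing both sides yields (ii). The inclusion $\supseteq$ in (i) is trivial: add a dummy first existential quantifier.

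For $\subseteq$ in (i), let $S \in \tmop{QC}\Sigma_{i+1}/\tmop{mpoly}$ with advice $a$ and verifier $V$. Define the promise problem $T$ on inputs $z=(a',x,y_1)$ by
\[ z\in T_{\tmop{yes}} \Leftrightarrow \forall y_2 \exists y_3 \ldots \Pr[V(a',x,y_1,\ldots,y_{i+1})=1]\geq 2/3, \]
with $T_{\tmop{no}}$ given by the dual quantifier string and the $\leq 1/3$ condition. This $T$ lies in the uniform class $\tmop{QC}\Pi_i$, since its verifier is the uniform $V$ with $a'$ treated as part of the input. This is the crucial point where advice must be folded into the input to stay in a uniform class. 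I will pad so that all $z$ arising from inputs of length $n$ have a common length $m(n)$, which is polynomial in $n$. By hypothesis $T\in\tmop{QC}\Sigma_i/\tmop{mpoly}$ with some advice $b$ and verifier $W$.

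The new advice for $S$ is $c_n=(a_n,b_{m(n)})$, and the new verifier $W'(c_n,x,(y_1,w_1),w_2,\ldots,w_i)$ runs $W(b_{m(n)},(a_n,x,y_1),w_1,\ldots,w_i)$. I then check correctness in both directions. If $x\in S_{\tmop{yes}}$, some $y_1$ puts $z$ in $T_{\tmop{yes}}$, so there exists $w_1$ such that $\forall w_2\ldots$ the acceptance probability is $\geq 2/3$. If $x\in S_{\tmop{no}}$, every $y_1$ puts $z$ in $T_{\tmop{no}}$, so for all $(y_1,w_1)$ there exists $w_2$ and so on, with acceptance probability $\leq 1/3$. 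Merging $\exists y_1\exists w_1$ into one quantifier over the pair gives exactly the $\tmop{QC}\Sigma_i$ shape, with polynomial certificate lengths.

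I expect the main obstacle to be the promise-problem subtlety. When $x\in S_{\tmop{yes}}$, the pairs $(a_n,x,y_1)$ for the ``wrong'' $y_1$ may fall outside $T_{\tmop{yes}}\cup T_{\tmop{no}}$, so $W$'s behavior on them is arbitrary. The argument must therefore use only the existential witness in the yes case and the universal statement in the no case, which the check above does. Beyond that, the remaining care is in length bookkeeping: advice for $T$ is indexed by $|z|$, so fixed lengths via padding are needed so that a single advice string $b_{m(n)}$ serves all $y_1$. I would also confirm that the bounded-error thresholds $2/3$ and $1/3$ pass through unchanged, so no amplification is required.
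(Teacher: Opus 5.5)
Your proposal is correct and follows the paper's proof. Like the paper, you peel off $\exists y_1$, treat the remainder as a uniform $\tmop{QC}\Pi_i$ problem on $(a_n,x,y_1)$, apply the hypothesis, and merge $\exists y_1\exists w_1$ into one quantifier. The differences are minor: you fix the advice-length mismatch by padding $y_1$ so that a single string $b_{m(n)}$ suffices, where the paper concatenates the advice $a''$ for every possible length of $y_1$ up to $q(n)$; and you get (ii) from (i) by complementation instead of repeating the argument.

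One small slip: the trivial inclusion $\tmop{QC}\Sigma_i/\tmop{mpoly}\subseteq\tmop{QC}\Sigma_{i+1}/\tmop{mpoly}$ comes from appending a dummy last quantifier that the verifier ignores. Prepending a dummy existential quantifier would break the required alternation.
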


\begin{proof}
  Below we prove part (i) of the theorem. Part (ii) follows from a similar
  argument.
  
  It suffices to show that $\tmop{QC} \Sigma_{i + 1} / \tmop{mpoly} \subseteq
  \tmop{QC} \Sigma_i / \tmop{mpoly}$
  
  Let $L \in \tmop{QC} \Sigma_{i + 1} / \tmop{mpoly}$ and let $V$ be a uniform
  family of verificaiton circuits for $L$ with advice $a \in \tmop{poly}$.
  
  By definition,
  \begin{eqnarray*}
    L_{\tmop{yes}} & = & \{ x| \exists y_1 \forall y_2 \exists y_3 \ldots Q_{i
    + 1} y_{i + 1} \Pr [V (a_{| x |}, x, y_1, y_2, \ldots, y_{i + 1}) = 1]
    \geqslant 2 / 3 \}\\
    L_{\tmop{no}} & = & \{ x| \forall y_1 \exists y_2 \forall y_3 \ldots
    \overline{Q}_{i + 1} y_{i + 1} \Pr [V (a_{| x |}, x, y_1, y_2, \ldots,
    y_{i + 1}) = 1] \leqslant 1 / 3 \}
  \end{eqnarray*}
  Part (i) follows from showing that $L \in \tmop{QC} \Sigma_i /
  \tmop{mpoly}$.
  
  Define a new promise problem $L'$:
  \begin{eqnarray*}
    L'_{\tmop{yes}} & = & \{ (b, x, y_1) | \forall y_2 \exists y_3 \ldots Q_{i
    + 1} y_{i + 1} \Pr [V (b, x, y_1, y_2, \ldots, y_{i + 1}) = 1] \geqslant 2
    / 3 \}\\
    L'_{\tmop{no}} & = & \{ (b, x, y_1) | \exists y_2 \forall y_3 \ldots
    \overline{Q}_{i + 1} y_{i + 1} \Pr [V (b, x, y_1, y_2, \ldots, y_{i + 1})
    = 1] \leqslant 1 / 3 \}
  \end{eqnarray*}
  Note that since $V$ is a verifier for $L'$ and the same $V$ with $a$ is a
  verifier for $L$
  \begin{eqnarray*}
    x \in L_{\tmop{yes}} & \Leftrightarrow & \exists y_1 : (a_{| x |}, x, y_1)
    \in L'_{\tmop{yes}}\\
    x \in L_{\tmop{no}} & \Leftrightarrow & \forall y_1 : (a_{| x |}, x, y_1)
    \in L'_{\tmop{no}} .
  \end{eqnarray*}
  $L'$ is clearly a valid promise problem in $\tmop{QC} \Pi_i$. Applying the
  hypothesis to $L'$ gives us that $L' \in \tmop{QC} \Sigma_i / \tmop{mpoly}$,
  which means that there is a language $L'' \in \tmop{QC} \Sigma_i /
  \tmop{mpoly}$ with uniform verifier circuit $V''$ and advice $a'' \in
  \tmop{poly}$ such that $L'_{\tmop{yes}} \subseteq L''_{\tmop{yes}}$ and
  $L'_{\tmop{no}} \subseteq L''_{\tmop{no}}$. In other words:
  \begin{eqnarray*}
    (b, x, y_1) \in L_{\tmop{yes}}' & \Rightarrow & (b, x, y_1) \in
    L_{\tmop{yes}}''\\
    (b, x, y_1) \in L_{\tmop{no}}' & \Rightarrow & (b, x, y_1) \in
    L_{\tmop{no}}'' .
  \end{eqnarray*}
  Combining the previous two implications gives us that
  \begin{eqnarray*}
    x \in L_{\tmop{yes}} & \Rightarrow & \exists y_1 : (a_{| x |}, x, y_1) \in
    L_{\tmop{yes}}''\\
    x \in L_{\tmop{no}} & \Rightarrow & \forall y_1 : (a_{| x |}, x, y_1) \in
    L_{\tmop{no}}'',
  \end{eqnarray*}
  so
  \begin{eqnarray*}
    x \in L_{\tmop{yes}} & \Rightarrow & \exists y_1 : \exists y_2 \forall y_3
    \ldots Q_i'' y_{i + 1} \Pr [V'' (a''_{| (a_{| x |}, x, y_1) |}, a_{| x |},
    x, y_1, y_2, \ldots, y_{i + 1}) = 1] \geqslant 2 / 3\\
    x \in L_{\tmop{no}} & \Rightarrow & \forall y_1 : \forall y_2 \exists y_3
    \ldots \overline{Q}_i'' y_{i + 1} \Pr [V'' (a''_{| (a_{| x |}, x, y_1) |},
    a_{| x |}, x, y_1, y_2, \ldots, y_{i + 1}) = 1] \leqslant 1 / 3 .
  \end{eqnarray*}
  From this it seems (from the perspective of deciding if $x \in L$) that it
  is unecessary to include $y_1$ and $a_{| x |}$ in $L''$ since can we combine
  $y_1$ and $y_2$ into a single witness and add $a_{| x |}$ to the advice;
  however attempting to use $V''$ without $y_1$ and and $a_{| x |}$ in the
  input poses the following technical problem: $V''$ would recieve the advice
  for the wrong input length $a''_{| x |}$ instead of $a''_{| (a_{| x |}, x,
  y_1) |}$.
  
  The solution is to define a new advice function $a'''$ that combines $a$
  and all possible $a''$:
  \[ a'''_{| x |} \assign (a_{| x |}, a''_{| (a_{| x |}, 1^{| x |}, 1^1) |},
     a''_{| (a_{| x |}, 1^{| x |}, 1^2) |}, \ldots, a''_{| (a_{| x |}, 1^{| x
     |}, 1^{q (| x |)}) |}), \]
  where $q$ is the polynomial upper limit on the size of $y_1$ required by the
  definition of $\tmop{QC} \Sigma_{i + 1} / \tmop{mpoly}$. $a'''$ is a
  poly-sized advice function since $a$ and $a''$ are poly-sized adviced
  functions and $q$ is a polynomial. It is trivial to design a poly-sized
  uniform circuit family which invokes $V''$ with the correct length advice
  when given $a'''$:
  \[ V''' (a'''_{| x |}, x, y_1, y_2, \ldots, y_{i + 1}) \assign V'' (a''_{|
     (a_{| x |}, x, y_1) |}, a_{| x |}, x, y_1, y_2, \ldots, y_{i + 1}) . \]
  By the definitions of $a'''$ and $V'''$ we have that
  \begin{eqnarray*}
    x \in L_{\tmop{yes}} & \Rightarrow & \exists (y_1, y_2) \forall y_3 \ldots
    Q_i'' y_{i + 1} \Pr [V''' (a_{| x |}''', x, y_1, y_2, \ldots, y_{i + 1}) =
    1] \geqslant 2 / 3\\
    x \in L_{\tmop{no}} & \Rightarrow & \forall (y_1, y_2) \exists y_3 \ldots
    \overline{Q}_i'' y_{i + 1} \Pr [V''' (a_{| x |}''', x, y_1, y_2, \ldots,
    y_{i + 1}) = 1] \leqslant 1 / 3,
  \end{eqnarray*}
  from which it follows that $L \in \tmop{QC} \Sigma_i / \tmop{mpoly}$.
\end{proof}

With a little work, we can derive the following corollaries from our key lemma. 

\begin{corollary}
  \label{cor:advice-collapse}The following statements are equivalent:
  
  1. $\tmop{QC} \Pi_i \subseteq \tmop{QC} \Sigma_i / \tmop{mpoly}$
  
  2. $\tmop{QC} \Sigma_i \subseteq \tmop{QC} \Pi_i / \tmop{mpoly}$
  
  3. $\tmop{QC} \Sigma_i / \tmop{mpoly} = \tmop{QC} \Pi_i / \tmop{mpoly}$
  
  4. $\tmop{QCPH} / \tmop{mpoly} = \tmop{QC} \Sigma_i / \tmop{mpoly}$
  
  5. $\tmop{QCPH} / \tmop{mpoly} = \tmop{QC} \Pi_i / \tmop{mpoly}$.
\end{corollary}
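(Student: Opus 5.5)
We prove the equivalences via the chain $1 \Leftrightarrow 2$, then $1 \Rightarrow 3 \Rightarrow 4,5$, and finally $4 \Rightarrow 1$ and $5 \Rightarrow 2$. Throughout we use that the advice classes are closed under the obvious inclusions: $K \subseteq K/\tmop{mpoly}$ (take empty advice), $(K/\tmop{mpoly})/\tmop{mpoly} = K/\tmop{mpoly}$ (concatenate the two advice strings, padding as in the construction of $a'''$ in Lemma \ref{lem:advice-collapse}), and $\tmop{QC}\Sigma_i/\tmop{mpoly} \cup \tmop{QC}\Pi_i/\tmop{mpoly} \subseteq \tmop{QC}\Sigma_{i+1}/\tmop{mpoly} \cap \tmop{QC}\Pi_{i+1}/\tmop{mpoly}$ (add dummy quantifiers that $V$ ignores).

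\textbf{$1 \Leftrightarrow 2$.} Take complements. Statement 1 gives $\tmop{co}\text{-}\tmop{QC}\Pi_i \subseteq \tmop{co}\text{-}(\tmop{QC}\Sigma_i/\tmop{mpoly})$. By Lemma \ref{co-mpoly-commute} the right-hand side equals $(\tmop{co}\text{-}\tmop{QC}\Sigma_i)/\tmop{mpoly} = \tmop{QC}\Pi_i/\tmop{mpoly}$, and the left-hand side is $\tmop{QC}\Sigma_i$. This is statement 2, and the converse is symmetric.

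\textbf{$1 \Rightarrow 3$.} Let $S \in \tmop{QC}\Pi_i/\tmop{mpoly}$, so $S = L/a$ with $L \in \tmop{QC}\Pi_i$. By statement 1, $L \in \tmop{QC}\Sigma_i/\tmop{mpoly}$. The advice $a$ depends only on $|x|$, so the advice for $L$ is evaluated at the single length $|(a_{|x|},x)|$, which is a function of $|x|$. Composing the two advice strings therefore gives $S \in \tmop{QC}\Sigma_i/\tmop{mpoly}$. The reverse inclusion follows from statement 2 by the same argument.

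\textbf{$3 \Rightarrow 4,5$.} Since 3 implies both 1 and 2, Lemma \ref{lem:advice-collapse} gives $\tmop{QC}\Sigma_{i+1}/\tmop{mpoly} = \tmop{QC}\Sigma_i/\tmop{mpoly}$ and $\tmop{QC}\Pi_{i+1}/\tmop{mpoly} = \tmop{QC}\Pi_i/\tmop{mpoly}$. Hence level $i+1$ again satisfies $\tmop{QC}\Sigma_{i+1}/\tmop{mpoly} = \tmop{QC}\Pi_{i+1}/\tmop{mpoly}$, so by induction on $j \ge i$ every level $j$ equals level $i$. For $j < i$, each level is contained in level $i$ by the dummy-quantifier inclusion. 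This gives 4 and 5.

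\textbf{$4 \Rightarrow 1$ and $5 \Rightarrow 2$.} Under statement 4, $\tmop{QC}\Pi_i \subseteq \tmop{QC}\Pi_i/\tmop{mpoly} \subseteq \tmop{QCPH}/\tmop{mpoly} = \tmop{QC}\Sigma_i/\tmop{mpoly}$, which is 1. Statement 5 gives 2 symmetrically.

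The main obstacle is the advice composition in $1 \Rightarrow 3$, together with the closure facts. The inner advice must be evaluated at a length that is determined by $|x|$ alone. If the encoding of $(a_{|x|},x)$ could vary in length, we would need the padded tuple-of-advice trick from the proof of Lemma \ref{lem:advice-collapse}. We would handle this by fixing a length-$p(|x|)$ padded encoding of the advice so that $|(a_{|x|},x)|$ is determined by $|x|$.
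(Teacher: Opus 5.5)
Your proof is correct. It shares the paper's skeleton for $3\Rightarrow 1,2$, $4\Rightarrow 1$, $5\Rightarrow 2$, and $3\Rightarrow 4,5$ (iterating Lemma \ref{lem:advice-collapse}); the real difference is how you reach statement 3.

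The paper derives $1\Rightarrow 3$ from the key lemma. Statement 1 gives $\tmop{QC}\Sigma_{i+1}/\tmop{mpoly}=\tmop{QC}\Sigma_i/\tmop{mpoly}$. The dummy-quantifier inclusion $\tmop{QC}\Pi_i/\tmop{mpoly}\subseteq\tmop{QC}\Sigma_{i+1}/\tmop{mpoly}$ then yields $\tmop{QC}\Pi_i/\tmop{mpoly}\subseteq\tmop{QC}\Sigma_i/\tmop{mpoly}$, and complementation (Lemma \ref{co-mpoly-commute}) upgrades this to equality. The paper handles $2\Rightarrow 3$ symmetrically with part (ii), and obtains $1\Leftrightarrow 2$ only through 3.

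You instead get $1\Leftrightarrow 2$ by direct complementation, and $1\Rightarrow 3$ by advice absorption, $(K/\tmop{mpoly})/\tmop{mpoly}=K/\tmop{mpoly}$. This is the quantifier-free special case of the advice-merging step inside the proof of Lemma \ref{lem:advice-collapse}, and an easier one: $|(a_{|x|},x)|$ is already determined by $|x|$, so your padding caveat is not actually needed.

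Your route shows that the equivalence of statements 1--3 is a purely structural fact about advice and complementation. The key lemma is then needed only for the genuine collapse to $\tmop{QCPH}/\tmop{mpoly}$. The paper's route avoids stating and proving a separate closure property by reusing the lemma it needs anyway.

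One small point: when you write $S\in\tmop{QC}\Pi_i/\tmop{mpoly}$ as $L/a$ with $L\in\tmop{QC}\Pi_i$, you are implicitly using Lemma \ref{co-mpoly-commute}. This is because the paper defines $\tmop{QC}\Pi_i/\tmop{mpoly}$ as $\tmop{co}(\tmop{QC}\Sigma_i/\tmop{mpoly})$.
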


\begin{proof}
  Assume the first statement; then by Lemma \ref{lem:advice-collapse}
  $\tmop{QC} \Sigma_{i + 1} / \tmop{mpoly} = \tmop{QC} \Sigma_i /
  \tmop{mpoly}$, so $\tmop{QC} \Pi_i / \tmop{mpoly} \subseteq \tmop{QC}
  \Sigma_i / \tmop{mpoly}$ (by the definition of $\tmop{QC} \Sigma_{i + 1} /
  \tmop{mpoly}$). Then by Lemma \ref{co-mpoly-commute} $\tmop{QC} \Pi_i /
  \tmop{mpoly} \subseteq \tmop{co} {\textendash} (\tmop{QC} \Pi_i /
  \tmop{mpoly})$ which implies $\tmop{QC} \Pi_i / \tmop{mpoly} = \tmop{co}
  {\textendash} (\tmop{QC} \Pi_i / \tmop{mpoly})$ (the third statement). The
  second statement implies the third by a similar argument. The third
  statement trivially implies the first and second statements; and thus the
  first three statements are equivalent.
  
  The fourth and fifth statements trivially imply the first and second
  statements respectively. The third statement implies $\tmop{QC} \Sigma_i /
  \tmop{mpoly} = \tmop{QC} \Pi_i / \tmop{mpoly} = \tmop{QC} \Sigma_{i + 1} /
  \tmop{mpoly} = \tmop{QC} \Pi_{i + 1} / \tmop{mpoly}$ by applying Lemma
  \ref{lem:advice-collapse}; recursively applying Lemma
  \ref{lem:advice-collapse} implies that $\tmop{QC} \Sigma_i / \tmop{mpoly} =
  \tmop{QC} \Pi_i / \tmop{mpoly} = \tmop{QCPH} / \tmop{mpoly}$. Thus all five
  statements are equivalent.
\end{proof}

Theorems \ref{thm:advice-collapse} and \ref{thm:non-advice-inclusion} follow from Corollary \ref{cor:advice-collapse}, which completes our work for this section.

\subsection{$\tmop{QCPH} / \tmop{mpoly}$ collapse implies $\tmop{QCPH}$
collapse}

In this section we will prove Theorem \ref{thm:collapse-collapse}.

Our proof of Theorem \ref{thm:collapse-collapse} uses a polynomial recursive
structure (also known as a self-reducibility structure) for a $\tmop{QC}
\Pi_i$-complete problem. We will use Trivial complete problem for $\tmop{QC}
\Pi_i$, defined below.

\begin{definition}
  ($\tmop{TCPQC} \Pi_i$: Trivial complete problem for $\tmop{QC} \Pi_i$)
  $\tmop{TCPQC} \Pi_i = ((\tmop{TCPQC} \Pi_i)_{\tmop{yes}}, (\tmop{TCPQC}
  \Pi_i)_{\tmop{no}})$ such that
  \begin{eqnarray*}
    V, x \in (\tmop{TCPQC} \Pi_i)_{\tmop{yes}} & \Leftrightarrow & \forall y_1
    \exists y_2 \forall y_3 \ldots \overline{Q}_i y_i \Pr [V (x, y_1, y_2,
    \ldots, y_i) = 1] \geqslant 2 / 3\\
    V, x \in (\tmop{TCPQC} \Pi_i)_{\tmop{no}} & \Leftrightarrow & \exists y_1
    \forall y_2 \exists y_3 \ldots Q_i y_i \Pr [V (x, y_1, y_2, \ldots, y_i) =
    1] \leqslant 1 / 3,
  \end{eqnarray*}
  where the $y_i$ are polynomial-size classical bit strings and each instance
  of $\tmop{TCPQC} \Pi_i$ consists of a quantum circuit $V$ and an instance
  $x$.
\end{definition}

Next we will define a recursive structure (c.f. \cite{Yap83}, \cite{KL80} self-reducible structure).

\begin{definition}
  (Recursive Structure) A recursive structure for a language $L =
  (L_{\tmop{yes}}, L_{\tmop{no}})$ is a tuple of functions $(B, F_1, \ldots,
  F_r, G_1, \ldots, G_s)$ which satisfy the following conditions:
  \begin{enumerateroman}
    \item $\forall (x \in L_{\tmop{yes}} \cup L_{\tmop{no}}) [D_L (x) = B (F_1
    (x), \ldots, F_r (x), D_L (G_1 (x)), \ldots, D_L (G_s (x)))]$ where $D_L$
    is the decision function for $L$ ($D_L (x) = \tmop{YES} \Leftrightarrow x
    \in L_{\tmop{yes}}$ and $D_L (x) = \tmop{NO} \Leftrightarrow
    L_{\tmop{no}}$).
    
    \item $\{ G_1, \ldots, G_s \}$ induces a well-founded strict partial
    ordering $\prec$ on $L_{\tmop{yes}} \cup L_{\tmop{no}}$; this means that
    for all $x \in L_{\tmop{yes}} \cup L_{\tmop{no}}$ and $k \in [s]$, $G_k
    (x) \in L_{\tmop{yes}} \cup L_{\tmop{no}}$ and either $G_k (x) \prec x$ or
    $G_k (x) = x$.
    
    \item We call $x$ a $\prec$-minimal element iff $G_k (x) = x$ for all
    $G_k$. If $x$ is $\prec$-minimal, then $B$ is only senstive to $F_1,
    \ldots, F_r$. In other words,
    \[ B (F_1 (x), \ldots, F_r (x), y_1, \ldots, y_s) = B (F_1 (x), \ldots,
       F_r (x), y_1', \ldots, y_s') \]
    for all $y_1, \ldots, y_s$ and $y'_1, \ldots, y'_s$.
  \end{enumerateroman}
\end{definition}

\begin{definition}
  (Polynomial Recursive Structure) A Polynomial Recursive Structure is a
  recursive structure $(B, F_1, \ldots, F_r, G_1, \ldots, G_s)$ where every
  function in $(B, F_1, \ldots, F_r, G_1, \ldots, G_s)$ is computable in
  quantum polynomial time. 
\end{definition}

Now we can describe a polynomial recursive structure for $\tmop{TCPQC} \Pi_i$.
Let $A_i$ denote some fixed reasonable encoding of the sentence
\[ \forall y_1 \exists y_2 \forall y_3 \ldots Q_i y_i V (x, y_1, y_2, \ldots,
   y_i) \]
as a classical bit string, where $V$ is a polynomial-size quantum circuit with
input registers $x, y_1, y_2, \ldots, y_i$ and $x$ is a classical bit string.
Define $E_i$ similarly, except with alternating quantifiers starting with
$\exists$. Clearly $A_i$ is also a reasonable way to encode a $\tmop{TCPQC}
\Pi_i$ instance. We describe a polynomial recursive structure for
$\tmop{TCPQC} \Pi_i$ encoded in this way. At a high level, the way the
structure works is that it tries both values for one bit of $y_i$ at a time
and runs $V$ if all the bits of $y_i$ have already been chosen. The structure
is $(B, F_1, F_2, G_0, G_1)$, where
\[ B (f_1, f_2, g_0, g_1) = \left\{\begin{array}{ll}
     f_2 & \tmop{if} f_1 = 0\\
     g_0 \vee g_1 & \tmop{if} f_1 = 1 \wedge f_2 = 1\\
     g_0 \wedge g_1 & \tmop{if} f_1 = 1 \wedge f_2 = 0
   \end{array}\right. \]
; $F_1 (a) = 1$ iff a encodes a sentence where $V$ is non-constant ($V$ has at
least one variable amongst its input registers), $F_2 (a) = 1$ iff $V$ is
constant (all the bits in its input registers have been given a value in $\{
0, 1 \}$) and evaluates to YES with probability $\geqslant 2 / 3$ or the
leftmost variable $y_i$ that occurs in an input register of $V$ is bounded by
$\exists$; \ $G_0 (a) = G_1 (a) = a$ if $F_1 (a) = 0$, otherwise $G_b (a)$
fixes the leftmost variable bit of the input register for $y_i$ in $V$ to $b$
where $y_i$ is the leftmost variable such that not all the bits in the
register for $y_i$ in $V$ have been fixed yet. Furthermore, we us a suitable
encoding such that $G_b$ is length preserving, so $| G_b (a) | = | a |$. It's
clear that $(B, F_1, F_2, G_0, G_1)$ can be computed in polynomial quantum
time and that all the functions can be computed deterministically except $F_2$
which can have exponentially small error probability by simulating $V$ a
polynomial number of time and taking the majority. It's also clear that $(B,
F_1, F_2, G_0, G_1)$ decides $\tmop{TCPQC} \Pi_i$, gives a partial ordering,
and has minimal elements since each application of $G_b$ decreases the number
of variable bits until there are none. By definition, $B$ is only sensitive to
$F_1$ and $F_2$ on the minimal elements where all the bits are fixed.

We are now ready to prove Theorem \ref{thm:collapse-collapse}. The high level
idea is to decide an instance of an advice problem with a polynomial recursive
structure in $\tmop{QC} \Sigma_{i'}$ by using an existential quantifier to get
some candidate advice and then using a universal quantifier and the recursive
structure to check that the candidate advice gives correct results for the
instance.

\begin{proof}
  Suppose that $\tmop{QC} \Sigma_i / \tmop{mpoly} = \tmop{QC} \Pi_i /
  \tmop{mpoly}$. It suffices to show that \ $\tmop{QC} \Pi_{i + 2} \subseteq
  \tmop{QC} \Sigma_{i + 2}$, by invoking the collapse theorem for QCPH
  (Theorem \ref{thm:qcph-collapse}).
  
  By Corollary \ref{cor:advice-collapse}, $\tmop{QCPH} / \tmop{mpoly}$
  collapses to the $i^{\tmop{th}}$ level, so $\tmop{QC} \Pi_{i + 2} \subseteq
  \tmop{QC} \Pi_{i + 2} / \tmop{mpoly} \subseteq \tmop{QC} \Pi_i /
  \tmop{mpoly}$, so $\tmop{TCPQC} \Pi_{i + 2} \in \tmop{QC} \Pi_i /
  \tmop{mpoly}$. Then there is some $P_i \in \tmop{QC} \Pi_i$ and advice
  function $a$ such that $x \in \tmop{TCPQC} \Pi_{i + 2} \Leftrightarrow (x,
  a_{| x |}) \in P_i$ where $x$ is encoded in the form $A_{i + 2}$. We define
  the following predicate $W (x, w)$:
  \[ \forall y [| y | = | x | \Rightarrow D_{P_i} (y, w) = B (F_1 (y), F_2
     (y), D_{P_i} (G_0 (y), w), D_{P_i} (G_1 (y), w))], \]
  where $D_{P_i}$ is a decision function for $P_i$ and $(B, F_1, F_2, G_0,
  G_1)$ is the polynomial recursive structure for $\tmop{TCPQC} \Pi_{i + 2}$.
  Since we may be computing $D_{P_i}$ using a randomized quantum algorithm we
  cannot directly evaluate the rightmost equality of $W (x, w)$, so we will
  use the following procedure instead:
  \begin{enumerateroman}
    \item Determine the number $n_i$ of input bits that are not yet fixed in
    $y$.
    
    \item Fix some $\tmop{QC} \Pi_i$ algorithm $\mathcal{A}'$ for $D_{P_i}$.
    
    \item Let the algorithm $\mathcal{A}$ for $D_{P_i} (y, w)$ return
    $\mathcal{A}' (y, w)$ when the number of input bits not yet fixed in $y$
    is $\leqslant n_i - 1$ and return $B \left( F_1 (y), F_2 (y) {,
    \mathcal{A}'}^7 (G_0 (y), w) {, \mathcal{A}'}^7 (G_1 (y), w) \right)$ when
    the number of bits not yet fixed in $y$ is $n_i$ and $w = a_{| x |}$,
    where ${\mathcal{A}'}^k$ represents running $\mathcal{A}'$ $k$ times and
    taking the majority. $\mathcal{A}$ is a valid $\tmop{QC} \Pi_i$ algorithm
    for $D_{P_i}$ since $\mathcal{A}'$ is a valid $\tmop{QC} \Pi_i$ algorithm
    for $D_{P_i}$, the probability that two runs of ${\mathcal{A}'}^7$ both
    make a correct decision is $\approx 0.68 > 2 / 3$, and the recursive
    structure is correct and deterministic. Note that we have not specified
    what $\mathcal{A}$ returns when $w \neq a_{| x |}$.
    
    \item Use $\mathcal{A}$ to compute the left hand side $\tmop{poly} (n)$
    times. Categorize the fraction of accept results into one of the following
    three groups: less than $\frac{1}{3} + \frac{1}{h (n)}$, more than
    $\frac{2}{3} - \frac{1}{h (n)}$, and other, for some $h (n) \in
    \tmop{poly} (n)$.
    
    \item Use $\mathcal{A}^7$ to compute the right hand side $\tmop{poly} (n)$
    times and choose from the same categories.
    
    \item The equality is satisfied if and only if the left and right hand
    sides are in the same category.
  \end{enumerateroman}
  Let $W' (x, w)$ denote $W (x, w)$ computed using the above procedure the
  evaluate the rightmost equality. $\Pr [W' (x, a_{| x |}) \tmop{accepts}]
  \geqslant \frac{2}{3}$ when restricted to $y \in (\tmop{TCPQC}
  \Pi_i)_{\tmop{yes}} \cup (\tmop{TCPQC} \Pi_i)_{\tmop{no}}$ by the correctnes
  of $D_{P_i}$ and the recursive structure and $\Pr [W' (x, a_{| x |})
  \tmop{accepts}] \geqslant \frac{2}{3}$ for the remaining values of $y$
  because both sides of the rightmost equality are drawn from the same
  distribution because they are computed the same way. We claim that the
  following sentence on $x$ decides $\tmop{TCPQC} \Pi_{i + 2}$:
  \begin{equation}
    \label{eq:qcpi2-complete} \exists w ((x, w) \in P_i \wedge W' (x, w)) .
  \end{equation}
  First we will establish that $x \in (\tmop{TCPQC} \Pi_{i + 2})_{\tmop{yes}}$
  implies that sentence \ref{eq:qcpi2-complete} is true: choose $w = a_{| x
  |}$, then $\Pr [W' (x, a_{| x |}) \tmop{accepts}] \geqslant \frac{2}{3}$ and
  $(x, w) \in P_i$. Next we establish that the truth of sentence
  \ref{eq:qcpi2-complete} implies that $x \in (\tmop{TCPQC} \Pi_{i +
  2})_{\tmop{yes}}$. Fix a value of $w$ which makes the sentence true (with
  bounded probability). We show that $(z, w) \in P_i \Rightarrow z \in
  \tmop{TCPQC} \Pi_{i + 2}$ by induction on $| x |$-length strings $z$ that
  encode the same instance as $x$ up to fixed bits in the input registers of
  the circuit: assume that $(G_b (z), w) \in P_i \Rightarrow G_b (z) \in
  \tmop{TCPQC} \Pi_{i + 2}$ for $b \in \{ 0, 1 \}$, then $D_{P_i} (z, w) = B
  (F_1 (z), F_2 (z), (G_0 (z) \in \tmop{TCPQC} \Pi_{i + 2}), (G_1 (z) \in
  \tmop{TCPQC} \Pi_{i + 2}))$ so $(D_{P_i} (z, w) = 1) \Rightarrow z \in
  \tmop{TCPQC} \Pi_{i + 2}$ by the correctness of the recursive structure. We
  use $\prec$-minimal $z$ \ as our base case; then $B (F_1 (z), F_2 (z), (z,
  w), (z, w)) = D_{P_i} (z, a_{| x |})$ because $B$ is not sensitive to $w$.
  
  Finally we show that the truth of sentence \ref{eq:qcpi2-complete} can be
  decided in $\tmop{QC} \Sigma_{i + 2}$, which completes the proof. $(x, w)
  \in P_i$ can be correctly decided in $\tmop{QC} \Pi_i$, computing $W' (x,
  w)$ is in $\tmop{QC} \Pi_{i + 1}$ because $\mathcal{A}$ is in $\tmop{QC}
  \Pi_i$ and the remaining operations can be done in polynomial time, and both
  success probabilities can be boosted in polynomial time so that the
  conjunction is correct with bounded probability. Thus deciding if sentence
  \ref{eq:qcpi2-complete} is true is in $\tmop{QC} \Sigma_{i + 2}$.
\end{proof}

\bibliographystyle{alpha}
\bibliography{references}

\end{document}